\newtheorem{proposition}{Proposition}
\newtheorem{corollary}{Corollary}
\newcommand{\expectation}{\ensuremath{\mathbb{E}}}
\newcommand{\Expt}{\expectation}
\newcommand{\probability}{\ensuremath{\mathbb{P}}}
\newcommand{\Prob}{\probability}
\newcommand{\sign}{\text{sign}}
\begin{document}

\title{Properties of the Polarization Transformations for the Likelihood Ratios of Symmetric B-DMCs}
\author{\IEEEauthorblockN{Mine Alsan}\\
\small\IEEEauthorblockA{Information Theory Laboratory\\
Ecole Polytechnique F\' ed\' erale de Lausanne\\
CH-1015 Lausanne, Switzerland\\
Email: mine.alsan@epfl.ch}
\normalsize}
\maketitle
\pagestyle{empty}
\thispagestyle{empty}
\IEEEpeerreviewmaketitle

\maketitle

\begin{abstract}
In this paper we investigate, starting with a symmetric B-DMC, the evolution of various probabilities of the likelihood ratios of the synthetic channels created by the recursive application of the basic polarization transformations.
The analysis provides a new perspective into the theory of channel polarization initiated by Ar{\i}kan and helps us to address a problem related to approximating the computations of the likelihood ratios of the synthetic channels.
\end{abstract}

\begin{IEEEkeywords}
Channel polarization, polar codes, min-sum approximation
\end{IEEEkeywords}

\section{Introduction}
Polar coding is a recent technique introduced by Ar{\i}kan~\cite{1669570} as an appealing error correction method;  
this class of codes are proved to achieve the symmetric capacity of any binary discrete memoryless channel (B-DMC) using low complexity encoders and decoders, 
and their block error probability is shown to decrease exponentially in the square root of the block length~\cite{5205856}. 

The design of polar codes is based on a phenomenon called channel polarization. 
The notion makes reference to two extreme situations of communication over a noiseless (perfect) channel and completely noisy channel.
In~\cite{1669570}, Arıkan describes a recursive process under which independent copies of a given B-DMC $W: \mathcal{X} \rightarrow \mathcal{Y}$ can be combined to exhibit polarization.  
The basic building block of this recursion consists of two successive channel transformations 
$W^{-}: \mathcal{X} \rightarrow \mathcal{Y}^{2}$ and $W^{+}: \mathcal{X} \rightarrow \mathcal{Y}^{2}\times\mathcal{X}$, whose transition probabilities
are defined as
\begin{align*} 
  &W^{-}(y_1 y_2 \mid u_1) = \displaystyle\sum_{u_2\in\mathcal{X}} \frac{1}{2} W(y_1 \mid u_1 \oplus u_2) W(y_2 \mid u_2), \\
  &W^{+}(y_1 y_2 u_1\mid u_2) = \frac{1}{2} W(y_1 \mid u_1 \oplus u_2) W(y_2 \mid u_2). 
\end{align*}
Referred as the basic polarization transformations, these constitute the elements of the design leading to the low complexity structure of the codes.

To build the theory of polarization,~\cite{1669570} considers the properties of the above transformations related to the symmetric capacities of the channels. 
Defined as 
  \begin{equation} \label{eq:symcap}
  I(W) = \displaystyle\sum_{x, y} \frac{1}{2} W(y\mid x) \log{\frac{W(y\mid x)}{\frac{\displaystyle1}{\displaystyle2} W(y\mid 0) + \frac{\displaystyle1}{\displaystyle2} W(y\mid 1)}}, \nonumber  
 \end{equation}
 by now it is well known that these transformations~\cite{1669570} 
\begin{enumerate}
 \item[(i)] preserve the sum symmetric capacity:
 \begin{equation*}
 I(W^{-}) + I(W^{+}) = 2 I(W),
 \end{equation*}
 \item[(ii)] improve the channel in $W^{+}$ and worsen in $W^{-}$:
 \begin{equation*}
 I(W^{-}) \leq I(W) \leq \hspace{1mm} I(W^{+}). 
 \end{equation*}  
 \end{enumerate}
This last property confirms that the evolution is in the right direction towards polarization. The idea now is to apply the same basic channel transformations 
to the channels $W^{-}$ and $W^{+}$. As a result, four channels $W^{--}$, $W^{-+}$, $W^{+-}$, and $W^{++}$ are obtained. However,
one is no longer able to compare in general the parameters of these four channels in terms of rate, except the knowledge that
the channel $W^{++}$ is the best one and the channel $W^{--}$ is the worst one. Instead of worrying about ordering the channels after a few steps,
the theory is founded by analyzing the convergence properties of the polarization process obtained by applying the transformations to the synthesized $\pm$ channels in a long sequence of steps. 

Let $(\Omega, \mathcal{F}, P)$ be a probability space. Assume the random sequence $B_1, \dots, B_{n}$ is drawn i.i.d according to a Bernoulli distribution 
with probabilities equal to $\frac{\displaystyle1}{\displaystyle2}$. Let $\mathcal{F}_{n}$ be the $\sigma$-algebra generated by this Bernoulli sequence. Then the polarization process for a given channel $W$ is defined~\cite{5205856} as the random sequence of channels $\{W_{n}\}$ such that $W_{0} = W$ and
\begin{equation}
 W_{n+1} = \left\lbrace \begin{array}{lll}
                           W_{n}^{-}  & \hbox{if} \hspace{2mm} B_{n} = 0\\ W_{n}^{+} & \hbox{if} \hspace{2mm} B_{n} = 1
                          \end{array}
\right. \nonumber
\end{equation}
for $n \geq 0$. In the sequel, the random process $I_{n} = I(W_{n})$ is defined and~\cite{1669570} proves the process $\{I_{n}, \mathcal{F}_{n}\}$
\begin{enumerate}
 \item[(iii)] is a bounded martingale on the interval $[0, 1]$,
 \item[(iv)] converges a.s. to a random variable $I_{\infty}$ such that $\Expt{\left[ I_{\infty}\right] } = I_{0}$, 
 where $I_{\infty}$ takes values a.s. in $\{0, 1\}$. 
\end{enumerate}
These cited two properties prove the recursive application of the basic polarization transformations lead to channel polarization, see \cite[Theorem 1]{1669570}.

The goal of this paper is to analyze the convergence properties of various random processes associated with the channel polarization process, as the ones described for the symmetric capacity process,  
but related this time to the likelihood ratios of the synthesized $\pm$ channels. We first apply this knowledge to revisit the theory of channel polarization for symmetric B-DMCs.
Subsequently, we shift our attention to the performance of an approximation to the minus polarization transformation known as the min-sum approximation in the coding theory literature. We identify a structure sufficient to guarantee no performance loss is incurred by an approximation, and we argue
slight modifications to the `min-sum' approximation can improve the performance. 

The next section explores these results. The final section gives the conclusions.

\section{results}
Let $W: \mathcal{X} \to \mathcal{Y}$ be a symmetric B-DMC. We define the likelihood ratio of this channel as $L(y) = W(y|1)/W(y|0)$ for $y\in\mathcal{Y}$. 
Similarly for each $n \geq 0$, the likelihood ratios of the $2^n$ channels $W_{n}: \mathcal{X} \to \mathcal{Y}^{2^n} \times \mathcal{X}^{i-1}$, for $i= 1, \ldots, 2^n$ are denoted as $L_{n}(\mathbf{y})$
for $\mathbf{y}\in\mathcal{Y}^{2^n}$.
\subsection*{Properties of the polar transforms}
In \cite[Equations (74) and (75)]{1669570} Ar{\i}kan shows the synthetic channels' likelihood ratios follow a recursive structure alongside the polarization process.
For a symmetric B-DMC, one can assume the all zeros sequence is sent through the channel. In this case, the corresponding likelihood ratio process can be defined as
\begin{equation*}
 L_{n+1}(\mathbf{y_1y_2}) = \begin{cases}
				 L_{n}^{-}(\mathbf{y_1y_2}), & \hbox{if } B_{n+1} = 0 \\ 
				 L_{n}^{+}(\mathbf{y_1y_2}), & \hbox{if } B_{n+1} = 1 
			    \end{cases}
\end{equation*}
where
\begin{align*}
L_{n}^{-}(\mathbf{y_1y_2}) &= \displaystyle\frac{L_{n}(\mathbf{y_1}) + L_{n}(\mathbf{y_2})}{1 + L_{n}(\mathbf{y_1})L_{n}(\mathbf{y_2})}, \\
L_{n}^{+}(\mathbf{y_1y_2}) &= L_{n}(\mathbf{y_1})L_{n}(\mathbf{y_2}). 
\end{align*}
We denote $\Prob\left[ . \right] \triangleq \Prob\left[ . | W_{n}\right]$ for shorthand notation.
Let us define two auxiliary processes
\begin{align*}
 \Prob\left[ L_{n}(\mathbf{y}) \gneq 1 \right] &\triangleq \Prob\left[ L_{n}(\mathbf{y}) > 1 \right] + \frac{1}{2} \Prob\left[ L_{n}(\mathbf{y}) = 1 \right]\\
 \Prob\left[ L_{n}(\mathbf{y}) \lneq 1 \right] &\triangleq \Prob\left[ L_{n}(\mathbf{y}) < 1 \right] + \frac{1}{2} \Prob\left[ L_{n}(\mathbf{y}) = 1 \right]
\end{align*}
such that $\Prob\left[ L_{n}(\mathbf{y}) \gneq 1 \right]  + \Prob\left[ L_{n}(\mathbf{y}) \lneq 1 \right] = 1$.

The following two propositions investigate monotonicity properties of the processes $\Prob\left[ L_{n}(\mathbf{y}) \gneq 1 \right]$ and $\Prob\left[ L_{n}(\mathbf{y}) \lneq 1 \right]$. 
Their proofs will be carried together.
\begin{proposition}\label{prop::LR_monotonicity}
Given that $\Prob\left[L_n(\mathbf{y}) > 1\right] \leq \Prob\left[L_n(\mathbf{y}) < 1\right]$ holds for a particular $n\geq 1$, the polar transformations for the likelihood ratios satisfy
\begin{align*}
  \Prob\left[L_{n}^{+}(\mathbf{y_1y_2}) \gneq 1\right] \leq \Prob\left[L_n(\mathbf{y}) \gneq 1\right] \leq \Prob\left[L_{n}^{-}(\mathbf{y_1y_2}) \gneq 1\right]  \\
  \Prob\left[L_{n}^{-}(\mathbf{y_1y_2}) \lneq 1\right] \leq \Prob\left[L_n(\mathbf{y}) \lneq 1\right] \leq \Prob\left[L_{n}^{+}(\mathbf{y_1y_2}) \lneq 1\right]  
\end{align*}
\end{proposition}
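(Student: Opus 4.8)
The plan is to prove only the first chain of inequalities; the second chain is its mirror image, because $\Prob[L\gneq 1]+\Prob[L\lneq 1]=1$ for every channel, so complementing $\Prob[L^{+}\gneq 1]\le \Prob[L\gneq 1]\le \Prob[L^{-}\gneq 1]$ immediately returns $\Prob[L^{-}\lneq 1]\le \Prob[L\lneq 1]\le \Prob[L^{+}\lneq 1]$. Throughout I abbreviate $a=\Prob[L_n>1]$, $b=\Prob[L_n<1]$, $c=\Prob[L_n=1]$, so that $a+b+c=1$, the hypothesis reads $a\le b$, and $\Prob[L_n\gneq 1]=a+\tfrac{c}{2}$. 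Let $L_1,L_2$ denote the two independent copies $L_n(\mathbf{y_1}),L_n(\mathbf{y_2})$, and pass to log-likelihood ratios $\Lambda_i=\log L_i$. The first step is to record clean event characterizations: the product rule gives $\{L^{+}>1\}=\{\Lambda_1+\Lambda_2>0\}$, while a short computation on $L^{-}=\frac{L_1+L_2}{1+L_1L_2}$ shows that $L^{-}-1$ carries the sign of $-(1-L_1)(1-L_2)$, so $\{L^{-}>1\}$ is exactly the event that one of $L_1,L_2$ exceeds $1$ while the other is below $1$, and $\{L^{-}=1\}=\{L_1=1\}\cup\{L_2=1\}$.

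For the $-$ transformation these characterizations make everything elementary. By independence, $\Prob[L^{-}>1]=2ab$, $\Prob[L^{-}=1]=2c-c^{2}$, and $\Prob[L^{-}<1]=a^{2}+b^{2}$, whence $\Prob[L^{-}\gneq 1]=2ab+c-\tfrac{c^{2}}{2}$. The inequality $\Prob[L_n\gneq 1]\le \Prob[L^{-}\gneq 1]$ then reduces to $2ab-a+\tfrac{c(1-c)}{2}\ge 0$. Eliminating $c=1-a-b$ and setting $s=a+b$, the left-hand side is a concave quadratic in $a$ on the interval $0\le a\le s/2$ (the upper limit is exactly where $a\le b$ becomes tight); I would finish by checking nonnegativity at the two endpoints — it equals $\tfrac{s(1-s)}{2}\ge 0$ at $a=0$ and vanishes at $a=b$ — and invoking concavity to cover the interior. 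Note that this half uses only the stated hypothesis $a\le b$.

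For the $+$ transformation the individual signs no longer determine $\sgn(\Lambda_1+\Lambda_2)$, so I would invoke the defining symmetry of a symmetric B-DMC with all-zero input, namely $\Expt[\psi(-\Lambda)]=\Expt[e^{\Lambda}\psi(\Lambda)]$ for every bounded $\psi$ (equivalently $\Prob[L=1/\ell]=\ell\,\Prob[L=\ell]$), a property inherited by every $L_n$ since the polar recursion preserves channel symmetry. Writing $\Prob[L\gneq 1]=\tfrac12\bigl(1+\Expt[\sgn\Lambda]\bigr)$ with the convention $\sgn 0=0$, the target $\Prob[L^{+}\gneq 1]\le \Prob[L_n\gneq 1]$ becomes $\Expt[\sgn(\Lambda_1+\Lambda_2)-\sgn\Lambda_1]\le 0$. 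I would split this expectation by the sign of $\Lambda_1$: the $\{\Lambda_1=0\}$ part contributes $c(a-b)\le 0$, and the plan for the other two parts is to reflect the $\{\Lambda_1<0\}$ part through the symmetry relation so that it pairs with the $\{\Lambda_1>0\}$ part, collapsing their sum into the single expectation $\Expt\bigl[(1-\sgn(\Lambda_1+\Lambda_2))\,\1{\Lambda_1>0}\,(e^{\Lambda_1+\Lambda_2}-1)\bigr]$. This integrand is manifestly nonpositive: it is supported on $\{\Lambda_1>0,\ \Lambda_1+\Lambda_2\le 0\}$, where $1-\sgn(\Lambda_1+\Lambda_2)\ge 0$ while $e^{\Lambda_1+\Lambda_2}-1\le 0$.

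The main obstacle is precisely this $+$ direction: unlike the $-$ transformation, it cannot be settled from the sign statistics $a,b,c$ alone — a non-symmetric law with $a\le b$ can reverse the inequality — so the argument must exploit the full likelihood-ratio symmetry of symmetric channels. The delicate point is organizing the reflection so that the positive $\{\Lambda_1<0\}$ contribution is exactly dominated by the negative $\{\Lambda_1>0\}$ one; steering the change of measure $e^{\Lambda_1+\Lambda_2}$ onto the region $\{\Lambda_1+\Lambda_2\le 0\}$ is what renders the final integrand sign-definite and closes the proof.
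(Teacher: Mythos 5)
Your proposal is correct, and its two halves relate to the paper's proof differently. For the minus transformation you and the paper do essentially the same thing: the event characterization ($L^{-}-1$ carries the sign of $-(1-L_1)(1-L_2)$) is exactly the paper's starting point, and your computation $\Prob[L^{-}\gneq 1]=2ab+c-\tfrac{c^2}{2}$ is the paper's identity $\Prob[L_{n}^{-}\gneq 1]=2\Prob[L_n\lneq 1]\Prob[L_n\gneq 1]$ in $(a,b,c)$ coordinates; the paper then finishes in one line from $\Prob[L_n\lneq 1]\geq\tfrac12$, whereas your concavity-plus-endpoints check of the quadratic is equivalent but more laborious. For the plus transformation — the genuinely hard half, since as you correctly note sign statistics alone cannot settle it — your route is a different organization of the same underlying tool. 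The paper uses the symmetry $\Prob[L_n=1/\ell]=\ell\,\Prob[L_n=\ell]$ to fold explicit double sums, arriving at $\Prob[L_{n}^{+}\gneq 1]=\Prob[L_n\gneq 1]^2+\sum\sum\Prob\,\Prob\,\min\{\ell_1,\ell_2\}$ and comparing against the decomposition $\Prob[L_n\gneq 1]=\Prob[L_n\gneq 1]^2+\Prob[L_n\lneq 1]\Prob[L_n\gneq 1]$ via the $\max+\min$ identity. You instead phrase everything as $\Expt[\sgn(\Lambda_1+\Lambda_2)-\sgn\Lambda_1]\leq 0$ and use the symmetry as an exponential change of measure to reflect the $\{\Lambda_1<0\}$ contribution onto $\{\Lambda_1>0\}$; I verified the collapse, and the resulting integrand $(1-\sgn(\Lambda_1+\Lambda_2))(e^{\Lambda_1+\Lambda_2}-1)\1{\Lambda_1>0}$ is indeed pointwise nonpositive, which closes the argument. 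Your version is more compact and coordinate-free; what the paper's bulkier max/min sum expressions buy is reuse — they are the workhorse for Propositions \ref{prop::one_step_ineq} and \ref{prop::Q_n_one_step_evolution} and for the min-sum approximation analysis later, which your sign-expectation identity does not directly provide. Two small remarks: your $\{\Lambda_1=0\}$ term $c(a-b)\leq 0$ is the only place you invoke the hypothesis $a\leq b$ in the plus half, while the paper's plus argument never needs it (in fact the symmetry relation forces $a\leq b$ automatically, since $b=\sum_{\ell>1}\ell\,\Prob[L_n=\ell]\geq a$); and your counterexample claim that asymmetric laws can reverse the plus inequality is right (e.g.\ $L\in\{1/2,100\}$ with masses $0.6,0.4$), which correctly justifies why symmetry must enter.
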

\begin{proposition}\label{prop::LR_mass_preservation}
Given that $\Prob\left[L_n(\mathbf{y}) > 1\right] \leq \Prob\left[L_n(\mathbf{y}) < 1\right]$ holds for a particular $n\geq 1$, the basic polarization transformations preserve this inequality, i.e. at the next level we get
\begin{align*}
\Prob\left[L_{n}^{-}(\mathbf{y_1y_2}) > 1\right] &\leq \Prob\left[L_{n}^{-}(\mathbf{y_1y_2}) < 1\right], \\
\Prob\left[L_{n}^{+}(\mathbf{y_1y_2}) > 1\right] &\leq \Prob\left[L_{n}^{+}(\mathbf{y_1y_2}) < 1\right]. 
\end{align*}
\end{proposition}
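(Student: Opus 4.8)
The plan is to push everything down to the two independent copies $a=L_n(\mathbf{y_1})$ and $b=L_n(\mathbf{y_2})$ of the current likelihood ratio, writing $p=\Prob[a>1]$, $q=\Prob[a<1]$, $r=\Prob[a=1]$, so that $p+q+r=1$ and the hypothesis is simply $p\le q$. First I would record the sign of each transform. For the minus transform I would use the algebraic identity $L_n^{-}-1=-(1-a)(1-b)/(1+ab)$; since $1+ab>0$ always, the sign of $L_n^{-}-1$ is opposite to that of $(1-a)(1-b)$. Thus $\{L_n^{-}>1\}$ is exactly the event that one of $a,b$ exceeds $1$ while the other is below $1$, and $\{L_n^{-}<1\}$ is the event that $a$ and $b$ fall on the same side of $1$.

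For the minus transform the inequality then follows by a plain count and needs no hypothesis: $\Prob[L_n^{-}>1]=2pq$ and $\Prob[L_n^{-}<1]=p^{2}+q^{2}$, so the claim is $2pq\le p^{2}+q^{2}$, i.e.\ $(p-q)^{2}\ge 0$. This is the same elementary estimate that will also yield the minus bounds of Proposition~\ref{prop::LR_monotonicity}.

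The plus transform is where the real work lies. Here $L_n^{+}=ab$, and the marginals $p,q,r$ pin down the sign of $ab-1$ only on the \emph{pure} part (both factors on the same side of $1$, or one factor equal to $1$); on the \emph{mixed} region $\{a>1,b<1\}\cup\{a<1,b>1\}$ the sign depends on the actual magnitudes, not merely on the counts. Letting $M_{>}$ and $M_{<}$ denote the probabilities that $ab>1$, respectively $ab<1$, on the mixed region, I would write $\Prob[L_n^{+}>1]=p^{2}+2pr+M_{>}$ and $\Prob[L_n^{+}<1]=q^{2}+2qr+M_{<}$. The hypothesis $p\le q$ disposes of the pure part, since $(q^{2}+2qr)-(p^{2}+2pr)=(q-p)(p+q+2r)\ge 0$; what remains, and what no argument based on $p,q,r$ alone can decide, is $M_{>}\le M_{<}$.

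The main obstacle is therefore exactly $M_{>}\le M_{<}$, and to overcome it I would invoke the symmetry of the likelihood-ratio law of a symmetric B-DMC under the all-zero input: if $\mu$ is the distribution of $L_n$, then $\mu(1/\ell)=\ell\,\mu(\ell)$ for every $0<\ell<\infty$ (equivalently, for $z=\log\ell$ the mass function obeys $f(-z)=e^{z}f(z)$). This relation holds at every level of the recursion because the plus and minus transforms of a symmetric channel are again symmetric, a standard fact. I would then apply the involution $(a,b)\mapsto(1/a,1/b)$, which carries $\{ab>1\}$ into $\{ab<1\}$ and, by the relation above, multiplies the probability mass of each configuration by the factor $ab$; on $\{ab>1\}$ this factor is $\ge 1$, so the image mass dominates and $M_{>}\le M_{<}$ follows. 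In fact the same involution applied globally gives $\Prob[L_n^{+}<1]\ge\Prob[L_n^{+}>1]$ in one stroke, so the stated hypothesis is really subsumed by the symmetry; I keep the pure/mixed split only because it makes transparent where that hypothesis would otherwise enter. Combining the two transforms completes the proof.
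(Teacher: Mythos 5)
Your proof is correct, and it rests on the same two pillars as the paper's own argument: the elementary count for the minus transform (your $2pq\le p^{2}+q^{2}$, i.e.\ $(p-q)^{2}\ge 0$, which indeed needs no hypothesis and is the paper's \eqref{eq::Qn_minus_squared}), and the likelihood-ratio symmetry $\Prob\left[L_{n}(\mathbf{y})=\ell\right]=\frac{1}{\ell}\Prob\left[L_{n}(\mathbf{y})=\frac{1}{\ell}\right]$, which is exactly the paper's \eqref{eq::sym_cond}, for the plus transform. Where you genuinely differ is in how the symmetry is deployed. The paper uses it to convert the mixed-region sums into weighted sums over $\{\ell\gneq 1\}$ and thereby obtains \emph{exact} identities \eqref{eq::L_plus_less_halfequal_than_one} and \eqref{eq::L_plus_greater_halfequal_than_one}, expressing the transformed masses through $\max\{\ell_1,\ell_2\}$ and $\min\{\ell_1,\ell_2\}$, and then compares the two using the hypothesis. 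You instead split into pure and mixed parts, dispose of the pure part via $(q-p)(p+q+2r)\ge 0$, and settle the mixed part $M_{>}\le M_{<}$ by the involution $(a,b)\mapsto(1/a,1/b)$ together with the mass-distortion factor $ab\ge 1$; this is in effect the same substitution the paper performs inside its sums, but used only to \emph{compare} masses rather than to \emph{evaluate} them. Your route is leaner for this proposition; the paper's heavier bookkeeping pays off downstream, since its exact max/min identities are reused to prove Proposition \ref{prop::LR_monotonicity}, the averaging inequalities of Proposition \ref{prop::one_step_ineq}, and the bracketing of $Q_{n}^{+}$ in Proposition \ref{prop::Q_n_one_step_evolution}, none of which follow from the inequality $M_{>}\le M_{<}$ alone. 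Your closing observation---that the same involution applied globally gives $\Prob\left[L_{n}(\mathbf{y})<1\right]\ge\Prob\left[L_{n}(\mathbf{y})>1\right]$ directly, so for symmetric channels the stated hypothesis is automatically satisfied and strictly speaking redundant---is correct and is not made explicit in the paper. One point to tighten: your appeal to the fact that the plus and minus transforms of a symmetric channel are again symmetric (so that \eqref{eq::sym_cond} holds at level $n$, not just at level $0$) is standard, going back to Ar{\i}kan, but it is the load-bearing step of the plus-side argument and should be cited or proved rather than asserted.
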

\begin{proof}[Proof of Propositions \ref{prop::LR_monotonicity} and \ref{prop::LR_mass_preservation}]
We first derive some useful expressions for the quantities of interest. After applying the minus transformation, we get
\begin{multline}\label{eq::L_minus_less_than_one}
\Prob\left[L_{n}^{-}(\mathbf{y_1y_2}) < 1\right] = \Prob\left[L_n(\mathbf{y_1}) < 1 \right]\Prob\left[L_n(\mathbf{y_2})  < 1 \right] \\ 
+ \Prob\left[L_n(\mathbf{y_1})  > 1 \right]\Prob\left[L_n(\mathbf{y_2})  > 1\right]  \\
= \Prob\left[L_n(\mathbf{y})  < 1 \right]^{2} + \Prob\left[L_n(\mathbf{y}) > 1\right]^{2},
\end{multline}
\begin{multline}\label{eq::L_minus_greater_than_one}
\Prob\left[L_{n}^{-}(\mathbf{y_1y_2}) > 1\right] = \Prob\left[L_n(\mathbf{y_1}) < 1 \right]\Prob\left[L_n(\mathbf{y_2})  > 1 \right] \\
+ \Prob\left[L_n(\mathbf{y_1}) > 1 \right]\Prob\left[L_n(\mathbf{y_2})  < 1 \right] \\
= 2 \Prob\left[L_n(\mathbf{y})  < 1 \right] \Prob\left[L_n(\mathbf{y}) > 1\right],
\end{multline}
and 
\begin{multline}\label{eq::L_minus_equal_one}
\Prob\left[L_{n}^{-}(\mathbf{y_1y_2}) = 1\right] = \Prob\left[L_n(\mathbf{y_1}) = 1 \right] + \Prob\left[L_n(\mathbf{y_2})  = 1 \right] \\
- \Prob\left[L_n(\mathbf{y_1}) = 1 \right]\Prob\left[L_n(\mathbf{y_2}) = 1 \right] \\
= 2\Prob\left[L_n(\mathbf{y}) = 1 \right] - \Prob\left[L_n(\mathbf{y}) = 1\right]^{2}.
\end{multline}
Using \eqref{eq::L_minus_greater_than_one} and \eqref{eq::L_minus_equal_one}, we obtain similarly
\begin{multline}\label{eq::L_minus_greater_halfequal_than_one}
\Prob\left[L_{n}^{-}(\mathbf{y_1y_2}) \gneq 1\right] \\
= \Prob\left[L_{n}^{-}(\mathbf{y_1y_2}) > 1\right] + \frac{1}{2}\Prob\left[L_{n}^{-}(\mathbf{y_1y_2}) = 1\right] \\
= 2 \Prob\left[L_n(\mathbf{y}) \lneq 1 \right]\Prob\left[L_n(\mathbf{y}) \gneq 1 \right] 
\end{multline}
as by few simple manipulations we get
\begin{multline}\label{eq::L_lneq_time_gneq}
\Prob\left[L_n(\mathbf{y}) \lneq 1 \right]\Prob\left[L_n(\mathbf{y}) \gneq 1 \right] \\
= \Prob\left[ L_{n}(\mathbf{y}) < 1 \right]\Prob\left[ L_{n}(\mathbf{y}) > 1 \right] + \frac{1}{2}\Prob\left[ L_{n}(\mathbf{y}) = 1 \right] \times \\
\displaystyle\underbrace{\left(\Prob\left[ L_{n}(\mathbf{y}) < 1 \right] + \Prob\left[ L_{n}(\mathbf{y}) > 1 \right]\right)}_{1 - \Prob\left[ L_{n}(\mathbf{y}) = 1 \right]} 
+ \frac{1}{4}\Prob\left[ L_{n}(\mathbf{y}) = 1 \right]^2 \\
= \Prob\left[L_n(\mathbf{y})  < 1 \right] \Prob\left[L_n(\mathbf{y}) > 1\right] + \frac{1}{2} \Prob\left[L_n(\mathbf{y}) = 1 \right] \\
- \frac{1}{4}\Prob\left[L_n(\mathbf{y}) = 1\right]^{2}.
\end{multline}
Hence, we also have
\begin{equation}\label{eq::L_minus_less_halfequal_than_one}
\Prob\left[L_{n}^{-}(\mathbf{y_1y_2}) \lneq 1\right] = \Prob\left[L_n(\mathbf{y}) \lneq 1 \right]^{2} + \Prob\left[L_n(\mathbf{y}) \gneq 1\right]^{2}. 
\end{equation}
Noting the difference of the quantities in \eqref{eq::L_minus_less_halfequal_than_one} and \eqref{eq::L_minus_greater_halfequal_than_one} equals
\begin{equation}\label{eq::Qn_minus_squared}
\left(\Prob\left[L_n(\mathbf{y}) \lneq 1\right] - \Prob\left[L_n(\mathbf{y}) \gneq 1\right]\right)^{2} \geq 0,
\end{equation}
proves the claim of Proposition \ref{prop::LR_mass_preservation} for the minus transformation. 

On the other hand, by assumption $\Prob\left[L_n(\mathbf{y}) \lneq 1 \right]\in[0.5, 1]$ holds. So, we have 
\begin{multline*}
\Prob\left[L_{n}^{-}(\mathbf{y_1y_2}) \gneq 1\right] = 2 \Prob\left[L_n(\mathbf{y}) \lneq 1 \right]\Prob\left[L_n(\mathbf{y}) \gneq 1 \right] \\
\geq  \Prob\left[L_{n}(\mathbf{y}) \gneq 1\right],
\end{multline*}
which also implies 
\begin{equation*}
\Prob\left[L_{n}^{-}(\mathbf{y_1y_2}) \lneq 1\right] \leq  \Prob\left[L_{n}(\mathbf{y}) \lneq 1\right], 
\end{equation*}
proving the inequalities in Proposition \ref{prop::LR_monotonicity} for the minus transformation.

For the plus transformation, we use a property following the symmetry of the channels
\begin{equation}\label{eq::sym_cond}
W(y|0) = \displaystyle\frac{W(y|1)}{L(y)} \Rightarrow \Prob\left[L_{n}(\mathbf{y}) = \ell\right] = \displaystyle\frac{1}{\ell} \Prob\left[L_{n}(\mathbf{y})= \displaystyle\frac{1}{\ell}\right].
\end{equation}
Then, we can write
\begin{align} 
&\Prob\left[L_{n}^{+}(\mathbf{y_1y_2}) \lneq 1\right] \nonumber\\
= &\displaystyle\sum_{\ell_1 < 1}\sum_{\ell_2 < 1} \Prob\left[L_n(\mathbf{y_1}) = \ell_1\right]\Prob\left[L_n(\mathbf{y_2}) = \ell_2\right]  \nonumber\\
&+ \displaystyle\sum_{\ell_1 < 1}\sum_{1 \leq \ell_2 < 1/\ell_1} \Prob\left[L_n(\mathbf{y_1}) = \ell_1\right]\Prob\left[L_n(\mathbf{y_2}) = \ell_2\right]  \nonumber\\
&+ \displaystyle\sum_{\ell_1 \geq 1}\sum_{\ell_2 \leq 1/\ell_1} \Prob\left[L_n(\mathbf{y_1}) = \ell_1\right]\Prob\left[L_n(\mathbf{y_2}) = \ell_2\right]  \nonumber\\
&- \frac{1}{2}\Prob\left[L_{n}(\mathbf{y}) = 1\right]^2 \nonumber\\
= &\Prob\left[L_{n}(\mathbf{y}) < 1\right]^2 - \frac{1}{2}\Prob\left[L_{n}(\mathbf{y}) = 1\right]^2  \nonumber\\
&+ \displaystyle\sum_{\ell_1 > 1}\sum_{1 \leq \ell_2 < \ell_1} \ell_1\Prob\left[L_n(\mathbf{y_1})= \ell_1\right]\Prob\left[L_n(\mathbf{y_2}) = \ell_2\right] \nonumber \\
&+ \displaystyle\sum_{\ell_1 \geq 1}\sum_{\ell_2 \geq \ell_1} \ell_2\Prob\left[L_n(\mathbf{y_1}) = \ell_1\right]\Prob\left[L_n(\mathbf{y_2}) = \ell_2\right] \nonumber \\
=  &\Prob\left[L_{n}(\mathbf{y}) < 1\right]^2 - \frac{1}{2}\Prob\left[L_{n}(\mathbf{y}) = 1\right]^2 \nonumber \\ 
&+ \displaystyle\sum_{\ell_1 > 1}\sum_{1 < \ell_2 < \ell_1} \ell_1\Prob\left[L_n(\mathbf{y_1})= \ell_1\right]\Prob\left[L_n(\mathbf{y_2}) = \ell_2\right] \nonumber\\
&+ \Prob\left[L_{n}(\mathbf{y}) = 1\right]\displaystyle\sum_{\ell_1 > 1}\ell_1\Prob\left[L_n(\mathbf{y_1})= \ell_1\right] \nonumber \\
&+ \displaystyle\sum_{\ell_1 > 1}\sum_{\ell_2 \geq \ell_1} \ell_2\Prob\left[L_n(\mathbf{y_1}) = \ell_1\right]\Prob\left[L_n(\mathbf{y_2}) = \ell_2\right] \nonumber\\
&+ \Prob\left[L_{n}(\mathbf{y}) = 1\right]\displaystyle\sum_{\ell_2 > 1}\ell_2\Prob\left[L_n(\mathbf{y_1})= \ell_2\right] + \Prob\left[L_{n}(\mathbf{y}) = 1\right]^2 \nonumber \\
=  &\Prob\left[L_{n}(\mathbf{y}) \lneq 1\right]^2 \nonumber\\
\label{eq::L_plus_less_halfequal_than_one}&+ \displaystyle\sum_{\ell_1 \gneq 1}\sum_{\ell_2 \gneq 1} \Prob\left[L_n(\mathbf{y_1}) = \ell_1\right]\Prob\left[L_n(\mathbf{y_2}) = \ell_2\right] \max\{\ell_1, \ell_2\} 
\end{align}
where we abuse the notation to define
\begin{multline*}
\displaystyle\sum_{\ell_1 \gneq 1}\sum_{\ell_2 \gneq 1} \Prob\left[L_n(\mathbf{y_1}) = \ell_1\right]\Prob\left[L_n(\mathbf{y_2}) = \ell_2\right] \max\{\ell_1, \ell_2\}  \\
= \displaystyle\sum_{\ell_1 > 1}\sum_{\ell_2 > 1} \Prob\left[L_n(\mathbf{y_1}) = \ell_1\right]\Prob\left[L_n(\mathbf{y_2}) = \ell_2\right] \max\{\ell_1, \ell_2\}  \\
+ \Prob\left[L_{n}(\mathbf{y}) = 1\right]\displaystyle\sum_{\ell > 1}\ell\Prob\left[L_n(\mathbf{y_1})= \ell\right] + \frac{1}{4}\Prob\left[L_{n}(\mathbf{y}) = 1\right]^2. 
\end{multline*}
In the same spirit, we define
\begin{multline*}
\displaystyle\sum_{\ell_1 \gneq 1}\sum_{\ell_2 \gneq 1} \Prob\left[L_n(\mathbf{y_1}) = \ell_1\right]\Prob\left[L_n(\mathbf{y_2}) = \ell_2\right] \min\{\ell_1, \ell_2\}  \\
= \displaystyle\sum_{\ell_1 > 1}\sum_{\ell_2 > 1} \Prob\left[L_n(\mathbf{y_1}) = \ell_1\right]\Prob\left[L_n(\mathbf{y_2}) = \ell_2\right] \min\{\ell_1, \ell_2\}  \\
+ \Prob\left[L_{n}(\mathbf{y}) = 1\right]\Prob\left[L_{n}(\mathbf{y}) > 1\right] + \frac{1}{4}\Prob\left[L_{n}(\mathbf{y}) = 1\right]^2,
\end{multline*}
and we note that
\begin{multline}\label{eq::min_plus_max}
\displaystyle\sum_{\ell_1 \gneq 1}\sum_{\ell_2 \gneq 1} \Prob\left[L_n(\mathbf{y_1}) = \ell_1\right]\Prob\left[L_n(\mathbf{y_2}) = \ell_2\right] \\
\left(\max\{\ell_1, \ell_2\} + \min\{\ell_1, \ell_2\} \right) \\
= \displaystyle\sum_{\ell_1 \gneq 1}\sum_{\ell_2 \gneq 1} \Prob\left[L_n(\mathbf{y_1}) = \ell_1\right]\Prob\left[L_n(\mathbf{y_2}) = \ell_2\right] (\ell_1 + \ell_2) \\
= 2\displaystyle\sum_{\ell_1 \gneq 1} \Prob\left[L_n(\mathbf{y_1}) = \ell_1\right] \ell_1 \sum_{\ell_2 \gneq 1} \Prob\left[L_n(\mathbf{y_2}) = \ell_2\right] \\
= 2\Prob\left[L_{n}(\mathbf{y}) \lneq 1\right]\Prob\left[L_{n}(\mathbf{y}) \gneq 1\right].
\end{multline}
As 
\begin{multline*}
1 = \Prob\left[L_{n}^{+}(\mathbf{y_1y_2}) \lneq 1\right] + \Prob\left[L_{n}^{+}(\mathbf{y_1y_2}) \gneq 1\right] \\
=  \left(\Prob\left[L_{n}(\mathbf{y}) \lneq 1\right] + \Prob\left[L_{n}(\mathbf{y}) \gneq 1\right]\right)^2 \\
\end{multline*}
must hold, we get
\begin{multline} \label{eq::L_plus_greater_halfequal_than_one}
\Prob\left[L_{n}^{+}(\mathbf{y_1y_2}) \gneq 1\right] = \Prob\left[L_{n}(\mathbf{y}) \gneq 1\right]^2 \\
+ \displaystyle\sum_{\ell_1 \gneq 1}\sum_{\ell_2 \gneq 1} \Prob\left[L_n(\mathbf{y_1}) = \ell_1\right]\Prob\left[L_n(\mathbf{y_2}) = \ell_2\right] \min\{\ell_1, \ell_2\} 
\end{multline}

Therefore, \eqref{eq::L_plus_less_halfequal_than_one} and \eqref{eq::L_plus_greater_halfequal_than_one} proves that 
\begin{equation*}
\Prob\left[L_{n}^{+}(\mathbf{y_1y_2}) \lneq 1\right] \geq \Prob\left[L_{n}^{+}(\mathbf{y_1y_2}) \gneq 1\right]
\end{equation*}
holds as claimed by Proposition \ref{prop::LR_mass_preservation}.  

On the other hand, we can decompose $\Prob\left[L_{n}(\mathbf{y}) \gneq 1\right]$ into
\begin{multline}\label{eq::L_gneq_one_decomposition}
\Prob\left[L_{n}(\mathbf{y}) \gneq 1\right] = \Prob\left[L_{n}(\mathbf{y}) > 1\right]^2  \\
+ \Prob\left[L_{n}(\mathbf{y}) > 1\right] \Prob\left[L_{n}(\mathbf{y}) \leq 1\right]  + \frac{1}{2}\Prob\left[L_{n}(\mathbf{y}) =1 \right] \\
=\left(\Prob\left[L_{n}(\mathbf{y}) \gneq 1\right] - \frac{1}{2}\Prob\left[L_{n}(\mathbf{y}) = 1\right]\right)^2  \\
+ \Prob\left[L_{n}(\mathbf{y}) > 1\right] \Prob\left[L_{n}(\mathbf{y}) \leq 1\right]  + \frac{1}{2}\Prob\left[L_{n}(\mathbf{y}) =1 \right] \\
= \Prob\left[L_{n}(\mathbf{y}) \gneq 1\right]^2 + \Prob\left[L_{n}(\mathbf{y}) > 1\right] \Prob\left[L_{n}(\mathbf{y}) < 1\right] \\
+ \frac{1}{2}\Prob\left[L_{n}(\mathbf{y}) =1 \right] - \frac{1}{4}\Prob\left[L_{n}(\mathbf{y}) = 1\right]^2 \\
= \Prob\left[L_{n}(\mathbf{y}) \gneq 1\right]^2 + \Prob\left[L_{n}(\mathbf{y}) \lneq 1\right]\Prob\left[L_{n}(\mathbf{y}) \gneq 1\right]
\end{multline}
where we used the derivation in \eqref{eq::L_lneq_time_gneq} to get the final equality. Comparing the expressions in \eqref{eq::L_plus_greater_halfequal_than_one} and \eqref{eq::L_gneq_one_decomposition} 
in the light of \eqref{eq::min_plus_max}, we see that
\begin{equation*}
 \Prob\left[L_{n}^{+}(\mathbf{y_1y_2}) \gneq 1\right] \leq \Prob\left[L_{n}(\mathbf{y}) \gneq 1\right],
\end{equation*}
which also implies
\begin{equation*}
 \Prob\left[L_{n}^{+}(\mathbf{y_1y_2}) \lneq 1\right] \geq \Prob\left[L_{n}(\mathbf{y}) \lneq 1\right].
\end{equation*}
proving the claimed inequalities in Proposition \ref{prop::LR_monotonicity} for the plus transformation.
\end{proof}

Next, we show the average of the transformed plus and minus quantities also satisfy some monotonicity properties.
\begin{proposition}\label{prop::one_step_ineq}
The following set of inequalities hold:
\begin{align}
 \label{eq::L_gneq_ineq}\Prob\left[L_{n}^{-}(\mathbf{y_1y_2}) \gneq 1\right] + \Prob\left[L_{n}^{+}(\mathbf{y_1y_2}) \gneq 1\right] \geq 2\Prob\left[L_{n}(\mathbf{y}) \gneq 1\right], \\
 \label{eq::L_lneq_ineq}\Prob\left[L_{n}^{-}(\mathbf{y_1y_2}) \lneq 1\right] + \Prob\left[L_{n}^{+}(\mathbf{y_1y_2}) \lneq 1\right] \leq 2\Prob\left[L_{n}(\mathbf{y}) \lneq 1\right], \\
 \label{eq::L_equal_one_ineq}\Prob\left[L_{n}^{-}(\mathbf{y_1y_2}) = 1\right] + \Prob\left[L_{n}^{+}(\mathbf{y_1y_2}) = 1\right] \geq 2\Prob\left[L_{n}(\mathbf{y}) = 1\right].
\end{align}
Hence, we also have
\begin{align*}
 \Prob\left[L_{n}^{-}(\mathbf{y_1y_2}) < 1\right] + \Prob\left[L_{n}^{+}(\mathbf{y_1y_2}) < 1\right] \leq 2\Prob\left[L_{n}(\mathbf{y}) < 1\right], \\
 \Prob\left[L_{n}^{-}(\mathbf{y_1y_2}) \geq 1\right] + \Prob\left[L_{n}^{+}(\mathbf{y_1y_2}) \geq 1\right] \geq 2\Prob\left[L_{n}(\mathbf{y}) \geq 1\right].
\end{align*}
\end{proposition}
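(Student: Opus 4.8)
The plan is to reduce every inequality to the closed-form expressions for the transformed quantities already established in the proofs of Propositions~\ref{prop::LR_monotonicity} and~\ref{prop::LR_mass_preservation}, so that essentially no new computation is required. I would first note that \eqref{eq::L_gneq_ineq} and \eqref{eq::L_lneq_ineq} are equivalent: since $\Prob[C \gneq 1] + \Prob[C \lneq 1] = 1$ for each of the channels $C\in\{L_n^-, L_n^+, L_n\}$, summing the two transformed sum-quantities gives the constant $2$, and likewise $2\Prob[L_n(\mathbf{y})\gneq 1] + 2\Prob[L_n(\mathbf{y})\lneq 1] = 2$; hence \eqref{eq::L_lneq_ineq} is exactly \eqref{eq::L_gneq_ineq} after subtracting both sides from $2$ (which reverses the direction). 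It therefore suffices to prove \eqref{eq::L_gneq_ineq}.

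For \eqref{eq::L_gneq_ineq} I would substitute $\Prob[L_n^-(\mathbf{y_1y_2})\gneq 1] = 2\Prob[L_n(\mathbf{y})\lneq 1]\Prob[L_n(\mathbf{y})\gneq 1]$ from \eqref{eq::L_minus_greater_halfequal_than_one} and $\Prob[L_n^+(\mathbf{y_1y_2})\gneq 1] = \Prob[L_n(\mathbf{y})\gneq 1]^2 + m$ from \eqref{eq::L_plus_greater_halfequal_than_one}, where $m$ denotes the $\min$-sum appearing there. Using the decomposition $\Prob[L_n(\mathbf{y})\gneq 1] = \Prob[L_n(\mathbf{y})\gneq 1]^2 + \Prob[L_n(\mathbf{y})\lneq 1]\Prob[L_n(\mathbf{y})\gneq 1]$ from \eqref{eq::L_gneq_one_decomposition}, the target collapses to the single clean claim $m \geq \Prob[L_n(\mathbf{y})\gneq 1]^2$. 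The crux is the lower bound on the $\min$-sum: every term has $\ell_1,\ell_2 > 1$, so $\min\{\ell_1,\ell_2\} > 1$, and replacing $\min\{\ell_1,\ell_2\}$ by $1$ throughout (keeping the boundary terms dictated by the $\ell=1$ convention) produces exactly $\left(\Prob[L_n(\mathbf{y})> 1] + \tfrac12\Prob[L_n(\mathbf{y})= 1]\right)^2 = \Prob[L_n(\mathbf{y})\gneq 1]^2$. This step, the only place where the $\geq 1$ range of the summation variables (rather than pure algebra) is exploited, is the main obstacle.

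For the equality statement \eqref{eq::L_equal_one_ineq} I would use $\Prob[L_n^-(\mathbf{y_1y_2}) = 1] = 2\Prob[L_n(\mathbf{y}) = 1] - \Prob[L_n(\mathbf{y}) = 1]^2$ from \eqref{eq::L_minus_equal_one}, which reduces the claim to $\Prob[L_n^+(\mathbf{y_1y_2}) = 1] \geq \Prob[L_n(\mathbf{y}) = 1]^2$. Writing $\Prob[L_n^+(\mathbf{y_1y_2}) = 1] = \sum_{\ell_1\ell_2 = 1}\Prob[L_n(\mathbf{y_1}) = \ell_1]\Prob[L_n(\mathbf{y_2}) = \ell_2]$ and retaining only the nonnegative $\ell_1 = \ell_2 = 1$ term immediately gives $\Prob[L_n(\mathbf{y}) = 1]^2$, so the bound holds.

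Finally, the two ``hence'' inequalities follow by pure algebra from the three already proved, via $\Prob[\cdot < 1] = \Prob[\cdot \lneq 1] - \tfrac12\Prob[\cdot = 1]$ and $\Prob[\cdot \geq 1] = \Prob[\cdot \gneq 1] + \tfrac12\Prob[\cdot = 1]$. For the strict ``$<1$'' statement I would combine the upper bound \eqref{eq::L_lneq_ineq} with the lower bound \eqref{eq::L_equal_one_ineq} entering with coefficient $-\tfrac12$, and for the ``$\geq 1$'' statement I would combine the lower bounds \eqref{eq::L_gneq_ineq} and \eqref{eq::L_equal_one_ineq}, the latter with coefficient $+\tfrac12$; in both cases the $\Prob[L_n(\mathbf{y})=1]$ contributions recombine to reproduce the claimed right-hand side.
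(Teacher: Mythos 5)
Your proof is correct and follows essentially the same route as the paper: the same substitutions of \eqref{eq::L_minus_greater_halfequal_than_one} and \eqref{eq::L_plus_greater_halfequal_than_one} reducing \eqref{eq::L_gneq_ineq} to the key bound \eqref{eq::non_neg} on the $\min$-sum, the complementarity argument for \eqref{eq::L_lneq_ineq}, and the reduction of \eqref{eq::L_equal_one_ineq} to $\Prob\left[L_{n}^{+}(\mathbf{y_1y_2}) = 1\right] \geq \Prob\left[L_{n}(\mathbf{y}) = 1\right]^2$ via \eqref{eq::L_minus_equal_one}. If anything, you are slightly more thorough than the paper, since you actually justify \eqref{eq::non_neg} (replacing $\min\{\ell_1,\ell_2\}$ by $1$ and checking the boundary convention) and spell out the algebra behind the two ``hence'' inequalities, both of which the paper asserts without detail.
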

\begin{proof}
We start by proving the inequality in \eqref{eq::L_gneq_ineq}.
Using the expressions derived in \eqref{eq::L_minus_greater_halfequal_than_one} and \eqref{eq::L_plus_greater_halfequal_than_one} show that
\begin{multline*}
 \Prob\left[L_{n}^{-}(\mathbf{y_1y_2}) \gneq 1\right] + \Prob\left[L_{n}^{+}(\mathbf{y_1y_2}) \gneq 1\right] \\
= 2\Prob\left[L_{n}(\mathbf{y}) \gneq 1\right]\underbrace{\Prob\left[L_{n}(\mathbf{y}) \lneq 1\right]}_{1 - \Prob\left[L_{n}(\mathbf{y}) \gneq 1\right]} + \Prob\left[L_{n}(\mathbf{y}) \gneq 1\right]^2 \\
+ \displaystyle\sum_{\ell_1 \gneq 1}\sum_{\ell_2 \gneq 1} \Prob\left[L_n(\mathbf{y_1})  = \ell_1\right]\Prob\left[L_n(\mathbf{y_2}) = \ell_2\right] \min\{\ell_1, \ell_2\} \\
= 2\Prob\left[L_{n}(\mathbf{y}) \gneq 1\right] - \Prob\left[L_{n}(\mathbf{y}) \gneq 1\right]^2 \\
+ \displaystyle\sum_{\ell_1 \gneq 1}\sum_{\ell_2 \gneq 1} \Prob\left[L_n(\mathbf{y_1})  = \ell_1\right]\Prob\left[L_n(\mathbf{y_2}) = \ell_2\right] \min\{\ell_1, \ell_2\} \\
\geq 2\Prob\left[L_{n}(\mathbf{y}) \gneq 1\right] 
\end{multline*}
where the inequality follows from
\begin{multline}\label{eq::non_neg}
 \displaystyle\sum_{\ell_1 \gneq 1}\sum_{\ell_2 \gneq 1} \Prob\left[L_n(\mathbf{y_1})  = \ell_1\right]\Prob\left[L_n(\mathbf{y_2}) = \ell_2\right] \min\{\ell_1, \ell_2\} \\
\geq \Prob\left[L_{n}(\mathbf{y}) \gneq 1\right]^2.
\end{multline}
This also proves the inequality in \eqref{eq::L_lneq_ineq} in view of the relation $\Prob\left[L_{n+1}(\mathbf{y_1y_2}) \lneq 1\right] = 1 - \Prob\left[L_{n+1}(\mathbf{y_1y_2}) \gneq 1\right]$.
Finally, to prove \eqref{eq::L_equal_one_ineq}, we write
\begin{multline*}
\Prob\left[L_{n}^{-}(\mathbf{y_1y_2}) = 1\right] + \Prob\left[L_{n}^{+}(\mathbf{y_1y_2}) = 1\right] \\
\geq 2 \Prob\left[L_{n}(\mathbf{y}) = 1\right]- \Prob\left[L_{n}(\mathbf{y}) = 1\right]^2 + \Prob\left[L_{n}(\mathbf{y}) = 1\right]^2
\end{multline*}
where we used \eqref{eq::L_minus_equal_one} and simply noted that $\Prob\left[L_{n}^{+}(\mathbf{y_1y_2}) = 1\right] \geq \Prob\left[L_{n}(\mathbf{y}) = 1\right]^2$ holds.
\end{proof}

Before we discuss the implications of the inequalities in Proposition \ref{prop::one_step_ineq} on the processes, we define another channel parameter as
\begin{align*}
 Q_{n} &\triangleq \Prob\left[ L_{n}(\mathbf{y}) \lneq 1 \right] - \Prob\left[ L_{n}(\mathbf{y}) \gneq 1 \right] \\
 &= \Prob\left[ L_{n}(\mathbf{y}) < 1 \right] - \Prob\left[ L_{n}(\mathbf{y}) > 1 \right].
\end{align*}

The one step transformations of $Q_{n}$ are given by
\begin{proposition}\label{prop::Q_n_one_step_evolution}
\begin{equation*}
 Q_{n+1} = \begin{cases}
				 Q_{n}^{-}, & \hbox{if } B_{n+1} = 0 \\ 
				 Q_{n}^{+}, & \hbox{if } B_{n+1} = 1 
			    \end{cases}
\end{equation*} 
where 
\begin{align*}
 Q_{n}^{-} &= Q_{n}^2, \\
 Q_{n}^{+} &\in \left[Q_{n}, 2Q_{n} - Q_{n}^2\right]. 
\end{align*}
\end{proposition}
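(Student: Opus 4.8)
The plan is to reduce every quantity to the single parameter $\Prob\left[L_n(\mathbf{y}) \gneq 1\right]$, using the complementarity relation $\Prob\left[L_n(\mathbf{y}) \gneq 1\right] + \Prob\left[L_n(\mathbf{y}) \lneq 1\right] = 1$. This yields the compact identities $Q_n = 1 - 2\Prob\left[L_n(\mathbf{y}) \gneq 1\right]$ and, applied to the synthesized channels, $Q_n^{-} = 1 - 2\Prob\left[L_n^{-}(\mathbf{y_1y_2}) \gneq 1\right]$ and $Q_n^{+} = 1 - 2\Prob\left[L_n^{+}(\mathbf{y_1y_2}) \gneq 1\right]$. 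First I would dispatch the minus case, which is immediate: subtracting \eqref{eq::L_minus_greater_halfequal_than_one} from \eqref{eq::L_minus_less_halfequal_than_one} reproduces exactly the quantity already evaluated in \eqref{eq::Qn_minus_squared}, so $Q_n^{-} = \Prob\left[L_n^{-}(\mathbf{y_1y_2}) \lneq 1\right] - \Prob\left[L_n^{-}(\mathbf{y_1y_2}) \gneq 1\right] = Q_n^2$, with no appeal to the standing assumption.

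For the plus transformation I would extract the two endpoints of the interval from two inequalities already in hand. The lower bound $Q_n^{+} \geq Q_n$ follows directly from the plus monotonicity in Proposition \ref{prop::LR_monotonicity}, namely $\Prob\left[L_n^{+}(\mathbf{y_1y_2}) \gneq 1\right] \leq \Prob\left[L_n(\mathbf{y}) \gneq 1\right]$; substituting this into $Q_n^{+} = 1 - 2\Prob\left[L_n^{+}(\mathbf{y_1y_2}) \gneq 1\right]$ gives $Q_n^{+} \geq 1 - 2\Prob\left[L_n(\mathbf{y}) \gneq 1\right] = Q_n$.

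The upper bound $Q_n^{+} \leq 2Q_n - Q_n^2$ comes from the non-negativity estimate \eqref{eq::non_neg}. Combining \eqref{eq::L_plus_greater_halfequal_than_one} with \eqref{eq::non_neg} gives $\Prob\left[L_n^{+}(\mathbf{y_1y_2}) \gneq 1\right] \geq 2\Prob\left[L_n(\mathbf{y}) \gneq 1\right]^2$. Writing $\Prob\left[L_n(\mathbf{y}) \gneq 1\right] = (1 - Q_n)/2$ and inserting the bound into $Q_n^{+} = 1 - 2\Prob\left[L_n^{+}(\mathbf{y_1y_2}) \gneq 1\right]$ produces $Q_n^{+} \leq 1 - 4\Prob\left[L_n(\mathbf{y}) \gneq 1\right]^2 = 1 - (1 - Q_n)^2 = 2Q_n - Q_n^2$, which closes the interval.

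I expect no substantive obstacle: the two plus-channel inequalities were already proved in Propositions \ref{prop::LR_monotonicity} and \ref{prop::one_step_ineq}, so what remains is essentially bookkeeping. The only care lies in the algebraic translation between the signed parameter $Q_n$ and the probability $\Prob\left[L_n(\mathbf{y}) \gneq 1\right]$, and in recording that the standing hypothesis $\Prob\left[L_n(\mathbf{y}) > 1\right] \leq \Prob\left[L_n(\mathbf{y}) < 1\right]$ (equivalently $Q_n \in [0,1]$) is precisely what makes the asserted interval $[Q_n, 2Q_n - Q_n^2]$ nonempty, since $2Q_n - Q_n^2 - Q_n = Q_n(1 - Q_n) \geq 0$ on that range.
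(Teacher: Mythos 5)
Your proof is correct and follows essentially the same route as the paper's: $Q_n^- = Q_n^2$ read off from \eqref{eq::Qn_minus_squared}, the lower bound $Q_n^+ \geq Q_n$ from Proposition \ref{prop::LR_monotonicity}, and the upper bound from \eqref{eq::L_plus_greater_halfequal_than_one} combined with \eqref{eq::non_neg}. The only difference is bookkeeping in the upper bound: where the paper expands $2Q_n - Q_n^2 - Q_n^+$ and invokes the identity \eqref{eq::min_plus_max}, you reduce everything to the single quantity $\Prob\left[L_{n}^{+}(\mathbf{y_1y_2}) \gneq 1\right]$ via the normalization $Q_n^+ = 1 - 2\Prob\left[L_{n}^{+}(\mathbf{y_1y_2}) \gneq 1\right]$, which is slightly more streamlined but rests on exactly the same two ingredients.
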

\begin{proof}
From the derivation of \eqref{eq::Qn_minus_squared}, we immediately get $Q_{n}^{-} = Q_{n}^2$.
Moreover, Proposition \ref{prop::LR_monotonicity} implies $Q_{n}^{+} \geq Q_{n}$. On the other hand, using \eqref{eq::L_plus_less_halfequal_than_one} and \eqref{eq::L_plus_greater_halfequal_than_one} we have
\begin{multline*}
 Q_{n}^{+} = \Prob\left[ L_{n}(\mathbf{y}) \lneq 1 \right]^2 - \Prob\left[ L_{n}(\mathbf{y}) \gneq 1 \right]^2 \\
 + \displaystyle\sum_{\ell_1 \gneq 1}\sum_{\ell_2 \gneq 1} \Prob\left[L_n(\mathbf{y_1}) = \ell_1\right]\Prob\left[L_n(\mathbf{y_2}) = \ell_2\right] \times \\
\left(\max\{\ell_1, \ell_2\} - \min\{\ell_1, \ell_2\}\right) \\
= Q_{n} + \displaystyle\sum_{\ell_1 \gneq 1}\sum_{\ell_2 \gneq 1} \Prob\left[L_n(\mathbf{y_1}) = \ell_1\right]\Prob\left[L_n(\mathbf{y_2}) = \ell_2\right] \times \\
\left(\max\{\ell_1, \ell_2\} - \min\{\ell_1, \ell_2\}\right).
\end{multline*}
Moreover, note that
\begin{multline*}
 Q_{n} - Q_{n}^2 = Q_{n} (1- Q_{n}) \\
= \left( \Prob\left[ L_{n}(\mathbf{y}) \lneq 1 \right] - \Prob\left[ L_{n}(\mathbf{y}) \gneq 1 \right]\right) 2 \Prob\left[ L_{n}(\mathbf{y}) \gneq 1 \right] \\
= 2 \Prob\left[ L_{n}(\mathbf{y}) \gneq 1 \right]\Prob\left[ L_{n}(\mathbf{y}) \lneq 1 \right] - 2\Prob\left[ L_{n}(\mathbf{y}) \gneq 1 \right]^2 
\end{multline*}
as $1 - Q_{n} = 2\Prob\left[ L_{n}(\mathbf{y}) \gneq 1 \right]$.
Therefore,
\begin{multline*}
 2Q_{n} - Q_{n}^2 - Q_{n}^{+} \\
= 2 \Prob\left[ L_{n}(\mathbf{y}) \gneq 1 \right]\Prob\left[ L_{n}(\mathbf{y}) \lneq 1 \right] - 2\Prob\left[ L_{n}(\mathbf{y}) \gneq 1 \right]^2 + Q_{n} \\
- Q_{n} - \displaystyle\sum_{\ell_1 \gneq 1}\sum_{\ell_2 \gneq 1} \Prob\left[L_n(\mathbf{y_1}) = \ell_1\right]\Prob\left[L_n(\mathbf{y_2}) = \ell_2\right]\times \\
\left(\max\{\ell_1, \ell_2\} - \min\{\ell_1, \ell_2\}\right).
\end{multline*}
Now, using the expression in \eqref{eq::min_plus_max} instead of $2 \Prob\left[ L_{n}(\mathbf{y}) \gneq 1 \right]\Prob\left[ L_{n}(\mathbf{y}) \lneq 1 \right]$ we  get
\begin{multline*}
 2Q_{n} - Q_{n}^2 - Q_{n}^{+} =  - 2\Prob\left[ L_{n}(\mathbf{y}) \gneq 1 \right]^2 \\
 + \displaystyle\sum_{\ell_1 \gneq 1}\sum_{\ell_2 \gneq 1} \Prob\left[L_n(\mathbf{y_1}) = \ell_1\right]\Prob\left[L_n(\mathbf{y_2}) = \ell_2\right] \times \\
\left(\max\{\ell_1, \ell_2\} + \min\{\ell_1, \ell_2\}\right) \\
 - \displaystyle\sum_{\ell_1 \gneq 1}\sum_{\ell_2 \gneq 1} \Prob\left[L_n(\mathbf{y_1}) = \ell_1\right]\Prob\left[L_n(\mathbf{y_2}) = \ell_2\right] \times \\
\left(\max\{\ell_1, \ell_2\} - \min\{\ell_1, \ell_2\}\right) \\
 = 2\displaystyle\sum_{\ell_1 \gneq 1}\sum_{\ell_2 \gneq 1} \Prob\left[L_n(\mathbf{y_1}) = \ell_1\right]\Prob\left[L_n(\mathbf{y_2}) = \ell_2\right] \min\{\ell_1, \ell_2\} \\
- 2\Prob\left[ L_{n}(\mathbf{y}) \gneq 1 \right]^2 \geq 0
\end{multline*}
where the non-negativity is due to \eqref{eq::non_neg} once again.
\end{proof}
\begin{corollary}
 The BEC is an extremal channel in the evolution of the process $Q_{n}$.
\end{corollary}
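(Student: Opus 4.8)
The plan is to exhibit the BEC as the case where the plus branch of the evolution attains the upper endpoint of the interval $\left[Q_n, 2Q_n - Q_n^2\right]$ of Proposition~\ref{prop::Q_n_one_step_evolution}; since the minus branch $Q_n^- = Q_n^2$ is common to all symmetric channels, this is the sense in which the BEC is extremal.

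First I would record the likelihood ratio law of a BEC with erasure probability $\epsilon$ under the all-zero convention: $L_n(\mathbf{y})$ is supported on $\{0, 1, \infty\}$, equal to $0$ with probability $1-\epsilon$, to $1$ with probability $\epsilon$, and to $\infty$ with probability $0$. In particular $\Prob\left[L_n(\mathbf{y}) > 1\right] = 0$, so every unit of mass that is $\gneq 1$ sits at the atom $\ell = 1$ (the atom at $\infty$ being null), and $Q_n = 1 - \epsilon$. Because the BEC is preserved under both polar transforms, this structure persists at every level, so it suffices to analyse a single step.

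The core step is to evaluate the min-sum $M \triangleq \displaystyle\sum_{\ell_1\gneq 1}\sum_{\ell_2\gneq 1}\Prob\left[L_n(\mathbf{y_1})=\ell_1\right]\Prob\left[L_n(\mathbf{y_2})=\ell_2\right]\min\{\ell_1,\ell_2\}$ that controls the gap to the upper bound, and to show that inequality~\eqref{eq::non_neg}, namely $M \geq \Prob\left[L_n(\mathbf{y})\gneq 1\right]^2$, is tight for the BEC. Since the only $\gneq 1$ mass lies at $\ell = 1$, each surviving term has $\min\{\ell_1,\ell_2\} = 1$, so the bound is met with equality. The one genuinely delicate point is the bookkeeping of the null atom at $\ell = \infty$: I would resolve the indeterminate product $\infty\cdot\Prob\left[L_n(\mathbf{y})=\infty\right]$ through the symmetry identity~\eqref{eq::sym_cond}, which rewrites it as the finite quantity $\Prob\left[L_n(\mathbf{y})=0\right]$; this contributes to the companion max-sum but is still multiplied by the vanishing factor $\Prob\left[L_n(\mathbf{y})=\infty\right]=0$ inside $M$, so the doubly-indexed $\infty$ term drops out and $M = \Prob\left[L_n(\mathbf{y})\gneq 1\right]^2$.

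Substituting $M = \Prob\left[L_n(\mathbf{y})\gneq 1\right]^2$ into the final display of the proof of Proposition~\ref{prop::Q_n_one_step_evolution} yields $2Q_n - Q_n^2 - Q_n^+ = 2\left(M - \Prob\left[L_n(\mathbf{y})\gneq 1\right]^2\right) = 0$, hence $Q_n^+ = 2Q_n - Q_n^2$. Together with $Q_n^- = Q_n^2$, this places the BEC on the upper boundary of the admissible region at every step, so among all symmetric B-DMCs sharing a given value of $Q_n$ the BEC maximises $Q_n^+$; this is the asserted extremality. The direct values $Q_n = 1-\epsilon$ and $Q_n^+ = 1-\epsilon^2 = 2(1-\epsilon) - (1-\epsilon)^2$ provide a consistency check.
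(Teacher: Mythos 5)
Your proof is correct and takes essentially the same route as the paper's: the paper's one-line proof likewise combines the fact that being a BEC is preserved under the polar transforms with the identity $Q_n^+ = 2Q_n - Q_n^2$, i.e.\ extremality means the plus branch attains the upper endpoint of the interval in Proposition~\ref{prop::Q_n_one_step_evolution}. The only difference is one of detail: the paper simply asserts $Q_n^+ = 2Q_n - Q_n^2$ (it is immediate since the plus transform of a BEC($\epsilon$) is a BEC($\epsilon^2$), giving $Q_n^+ = 1-\epsilon^2 = 2Q_n - Q_n^2$), whereas you verify it by showing the min-sum bound \eqref{eq::non_neg} holds with equality for the BEC, including a careful treatment of the null atom at $\ell = \infty$.
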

\begin{proof}
 The proof follows by noting that being a BEC is preserved under the polarization transformations~\cite{1669570} with $Q_{n}^{+} = 2Q_{n} - Q_{n}^2$. 
\end{proof}

Now, we discuss the convergence properties of the processes we considered so far.

\begin{proposition}\label{prop::martingales_convergences}
Let $W$ be a symmetric B-DMC such that $\Prob[L_{0}(\mathbf{y}) > 1] \leq \Prob[L_{0}(\mathbf{y}) < 1]$ holds. Then,
\begin{enumerate}
\item[(i)] The process $Q_n$ is a bounded supermartingale in $[0, 1]$ and converges a.s. to $\{0, 1\}$.
\item[(ii)] The process $\Prob\left(L_{n}(\mathbf{y}) \gneq 1\right)$ is a bounded submartingale in $[0, 0.5]$ and converges a.s. to $\{0, 0.5\}$. 
\item[(iii)] The process $\Prob\left(L_{n}(\mathbf{y})= 1\right)$ is a bounded submartingale in $\in[0, 1]$  and converges a.s. to $\{0, 1\}$.  
\item[(iv)] The process $\Prob\left(L_{n}(\mathbf{y}) \lneq 1\right)$ is a bounded supermartingale in $\in[0.5, 1]$ and converges a.s. to $\{0.5, 1\}$.
\end{enumerate}
\end{proposition}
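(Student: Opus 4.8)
The plan is to reduce all four statements to the behavior of the single process $Q_n$ together with the auxiliary process $R_n \triangleq \Prob\left[L_n(\mathbf{y}) = 1\right]$. First I would record the two affine identities $\Prob\left[L_n(\mathbf{y}) \gneq 1\right] = (1 - Q_n)/2$ and $\Prob\left[L_n(\mathbf{y}) \lneq 1\right] = (1 + Q_n)/2$, which follow directly from the definition of $Q_n$ and the normalization $\Prob\left[L_n(\mathbf{y}) \gneq 1\right] + \Prob\left[L_n(\mathbf{y}) \lneq 1\right] = 1$. Because the assumption $\Prob\left[L_0(\mathbf{y}) > 1\right] \leq \Prob\left[L_0(\mathbf{y}) < 1\right]$ is preserved by Proposition \ref{prop::LR_mass_preservation}, we have $\Prob\left[L_n(\mathbf{y}) \gneq 1\right] \leq 1/2$ for every $n$, hence $Q_n \in [0,1]$. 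Once the supermartingale property and limit set of $Q_n$ are settled in (i), statements (ii) and (iv) follow immediately: since $x \mapsto (1-x)/2$ is affine and decreasing it turns the supermartingale $Q_n$ into a submartingale on $[0, 0.5]$ converging a.s.\ to $(1-Q_\infty)/2 \in \{0, 0.5\}$, while $x \mapsto (1+x)/2$ is affine and increasing and yields a supermartingale on $[0.5, 1]$ converging to $(1+Q_\infty)/2 \in \{0.5, 1\}$.

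For (i), the main computation is the one-step conditional expectation. Since $B_{n+1}$ is $\mathrm{Bernoulli}(1/2)$ and $Q_n$ is $\mathcal{F}_n$-measurable, Proposition \ref{prop::Q_n_one_step_evolution} gives $\Expt\left[Q_{n+1} \mid \mathcal{F}_n\right] = \tfrac12 Q_n^2 + \tfrac12 Q_n^{+}$. Using the upper bound $Q_n^{+} \leq 2Q_n - Q_n^2$ from the same proposition yields $\Expt\left[Q_{n+1} \mid \mathcal{F}_n\right] \leq \tfrac12 Q_n^2 + \tfrac12(2Q_n - Q_n^2) = Q_n$, so $Q_n$ is a supermartingale; being bounded in $[0,1]$, it converges a.s.\ to some $Q_\infty$ by the martingale convergence theorem.

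The step I expect to be the crux is pinning the limit to the two-point set $\{0,1\}$, since the plus branch only confines $Q_n^{+}$ to an interval and cannot by itself identify $Q_\infty$. Here I would exploit the deterministic minus branch $Q_n^{-} = Q_n^2$. Because the $B_n$ are i.i.d.\ $\mathrm{Bernoulli}(1/2)$, the event $B_{n+1} = 0$ occurs for infinitely many $n$ almost surely (second Borel--Cantelli lemma). Along a subsequence $n_k$ on which $B_{n_k+1} = 0$ we have $Q_{n_k+1} = Q_{n_k}^2$; letting $k \to \infty$ and using $Q_{n_k} \to Q_\infty$ together with $Q_{n_k+1} \to Q_\infty$ forces $Q_\infty = Q_\infty^2$, i.e.\ $Q_\infty \in \{0,1\}$.

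Finally, for (iii) I would treat $R_n$ by the same template. The minus transformation is computed in closed form in \eqref{eq::L_minus_equal_one} as $R_n^{-} = 2R_n - R_n^2$, while the product structure of the plus transformation gives $R_n^{+} \geq R_n^2$, since the event $L_n(\mathbf{y_1})L_n(\mathbf{y_2}) = 1$ contains $\{L_n(\mathbf{y_1}) = 1,\ L_n(\mathbf{y_2}) = 1\}$, as already noted in the proof of Proposition \ref{prop::one_step_ineq}. Hence $\Expt\left[R_{n+1} \mid \mathcal{F}_n\right] = \tfrac12(2R_n - R_n^2) + \tfrac12 R_n^{+} \geq R_n$, so $R_n$ is a bounded submartingale and converges a.s.\ to $R_\infty$. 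Repeating the subsequence argument along the minus branch, where $R_{n_k+1} - R_{n_k} = R_{n_k} - R_{n_k}^2 \to 0$, forces $R_\infty = R_\infty^2$, so $R_\infty \in \{0,1\}$, completing (iii).
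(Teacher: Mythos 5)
Your proof is correct, and it shares the paper's overall skeleton --- boundedness from the preserved inequality, the martingale properties from the one-step results, the bounded martingale convergence theorem, and identification of the limits through the deterministic minus branch --- but it departs from the paper in two genuine ways. First, the paper obtains the sub/supermartingale property of each of the four processes from the summed inequalities of Proposition \ref{prop::one_step_ineq}, while you work only with $Q_n$ and $R_n = \Prob\left[L_{n}(\mathbf{y}) = 1\right]$ and recover (ii) and (iv) from (i) via the affine identities $\Prob\left[L_{n}(\mathbf{y}) \gneq 1\right] = (1-Q_n)/2$ and $\Prob\left[L_{n}(\mathbf{y}) \lneq 1\right] = (1+Q_n)/2$; this is a cleaner reduction, which the paper uses only implicitly in its final sentence to pin down the limits of the remaining processes after $Q_n$ and $R_n$ are settled. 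Second, and more substantively, your identification of $Q_\infty$ and $R_\infty$ is pathwise: by the second Borel--Cantelli lemma the minus branch is taken infinitely often a.s., and along that subsequence the closed-form recursions $Q_{n}^{-} = Q_n^2$ and $R_{n}^{-} = 2R_n - R_n^2$ pass to the limit, forcing the fixed-point equations $Q_\infty = Q_\infty^2$ and $R_\infty = R_\infty^2$. The paper instead follows Ar{\i}kan's Proposition 9 and argues in expectation: a.s. convergence of a bounded process gives $\Expt\left[\lvert Q_{n+1}-Q_n\rvert\right] \to 0$, and since the minus branch is chosen with probability $1/2$ this expectation dominates $\tfrac{1}{2}\Expt\left[Q_n(1-Q_n)\right]$, whence $\Expt\left[Q_\infty(1-Q_\infty)\right] = 0$ and the limit is extremal. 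Both arguments exploit exactly the same key fact, namely the quadratic form of the minus branch; yours is more elementary and self-contained, needing nothing beyond a.s. convergence and Borel--Cantelli, while the paper's $L^1$ device is the standard tool in the polarization literature and would continue to work in settings where the one-step contraction is controlled only in expectation rather than by an exact pointwise formula.
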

\begin{proof}
The assumption on the channel $W$ implies via Proposition \ref{prop::LR_monotonicity} that $\Prob\left[L_{n}(\mathbf{y}) > 1\right] \leq \Prob\left[L_{n}(\mathbf{y}) < 1\right]$ holds for all $n = 1, 2, \ldots$.
This constraints the probabilities to $\Prob\left[L_{n}(\mathbf{y}) > 1\right] \in[0, 0.5]$, $\Prob\left[L_{n}(\mathbf{y}) < 1\right] \in[0.5, 1]$, $\Prob\left[L_{n}(\mathbf{y}) = 1\right] \in[0, 1]$, 
from which the boundedness statements follow. 

The inequalities proved in Proposition \ref{prop::one_step_ineq} shows the processes are the claimed martingales. From general results on bounded martingales, it follows the processes converge a.s. 
The only part left is to prove the convergence is to the extremes of the bounded intervals.
For the process $Q_{n}$, we know by Proposition \ref{prop::Q_n_one_step_evolution} that $Q_{n}^{-} = Q_{n}^2$. 
One can complete the proof that $Q_{n}$ converges to the extremes using this relation in a similar fashion as in the proof of \cite[Proposition 9]{1669570}
of the convergence to the extremes of the Bhattacharyya process of the synthetic channels associated with the polarization transformations:
 \begin{align}
 &\Expt{\left[ \lvert Q_{n+1} - Q_{n} \rvert \right] } \xrightarrow[n\to\infty]{} 0 \nonumber \\
 \Rightarrow &\Expt{\left[ \lvert Q_{n+1} - Q_{n}\rvert \right] } \geq \frac{1}{2} \Expt{\left[ Q_{n} \left( 1-  Q_{n}\right) \right] }  \xrightarrow[n\to\infty]{} 0, \nonumber  
 \end{align} 
 whence $Q_{\infty}\in\{0, 1\}$. Similarly, we know by \eqref{eq::L_minus_equal_one} 
 that $\Prob\left[L_{n}^{-}(\mathbf{y_1y_2}) = 1\right] = 2\Prob\left[L_{n}(\mathbf{y}) = 1\right] - \Prob\left[L_{n}(\mathbf{y}) = 1\right]^2$ holds, 
 so that once again $\Prob\left[L_{\infty}(\mathbf{y}) = 1\right]\in\{0, 1\}$ since
 \begin{align*}
 &\Expt{\left[ \lvert \Prob\left[L_{n+1}(\mathbf{y_1y_2}) = 1\right] - \Prob\left[L_{n}(\mathbf{y}) = 1\right] \rvert \right] } \xrightarrow[n\to\infty]{} 0   \\
 \Rightarrow &\Expt{\left[ \lvert \Prob\left[L_{n+1}(\mathbf{y_1y_2}) = 1\right]  - \Prob\left[L_{n}(\mathbf{y}) = 1\right]\rvert \right] } \\
 &\geq \frac{1}{2}  \Expt{\left[ \Prob\left[L_{n}(\mathbf{y}) = 1\right] \left( 1-  \Prob\left[L_{n}(\mathbf{y}) = 1\right] \right) \right] }  \xrightarrow[n\to\infty]{} 0.  
 \end{align*} 
Now once $Q_{n}$ and $\Prob\left[L_{\infty}(\mathbf{y}) = 1\right]$ converge to their extremes, the remaining probabilities can only converge to the extremes claimed by the proposition.
\end{proof}

\subsection*{Channel Polarization Revisited}
Now, we revisit the theory of channel polarization for symmetric B-DMCs. 
Let us start by describing a perfect channel and a completely noisy channel in terms of the channel parameters we have discussed so far. 
It is easy to see that the channel is perfect when $Q(W) = 1$, which is possible only when $\Prob[L_{\infty}<1]= 1$, $\Prob[L_{\infty}=1] = \Prob[L_{\infty}>1] = 0$ hold. Without any surprise, we get $I(W) = 1$ in this case.
On the other hand, the channel is completely noisy when $\Prob[L_{\infty}=1] = 1$, $\Prob[L_{\infty}<1] = \Prob[L_{\infty}>1] = 0$, giving $Q(W) = 0$ and $I(W) = 0$. 

At this point, we can simply eliminate the other possibilities as we know $I_{n}$ is a bounded martingale process with $I_{\infty}\in\{0, 1\}$ from \cite{1669570} and capacity cannot be created. 
These are exactly the arguments proving once channels are polarized the fraction of moderate channels vanishes. Yet, let us ignore this knowledge for a moment to simply look to the four 
possible combinations of the pair $Q_{\infty}$ and $\Prob\left[L_{\infty} = 1\right]$, two of which we hopefully `never' end up with. 

\begin{enumerate}
 \item $Q_{\infty} = 1$, $\Prob[L_{\infty}=1] = 1$: As $Q_{\infty} = \Prob[L_{\infty}<1] - \Prob[L_{\infty}>1] = 1$ holds, we find $\Prob[L_{\infty}<1] = 1$, contradicting $\Prob[L_{\infty}=1] = 1$. So, this case is not possible. 
 \item $Q_{\infty} = 1$, $\Prob[L_{\infty}=1] = 0$: We look at a perfect channel.
 \item $Q_{\infty} = 0$, $\Prob[L_{\infty}=1] = 1$: We look at a completely noisy channel.
 \item $Q_{\infty} = 0$, $P\Prob[L_{\infty}=1] = 0$: These constraints only tell us $\Prob[L_{\infty}<1] = \Prob[L_{\infty}>1] = 0.5$ and $\Prob[L_{\infty}=1] = 0$. Hence, we are looking at a 
'completely moderate' channel. However, Proposition \ref{prop::LR_monotonicity} shows that the polar transforms are monotone for the probabilities of the likelihood ratios.
Consequently, this case will not occur unless we start with a channel at the state $\Prob[L_{0} < 1] = \Prob[L_{0} > 1] = 0.5$, but this would violate the symmetry condition. 
\end{enumerate}
Note that we still need the preservation of the sum capacities, i.e. $I(W_{n})$ being a martingale, to show that the fraction of perfect channels is $I(W)$. 

Moreover, the results on the rate of convergence of polar codes \cite{5205856} can be stated in terms of $Q_{n}$: note that the conditions (z.1), (z.2), (z.3) in \cite{5205856} still hold with $Z_{n}$ replaced by $Q_{n}$, and with the condition $\Prob\left[Z_\infty = 0\right] = I_0$ in (z.3) replaced by $\Prob\left[Q_\infty = 0\right] = 1 - I_0$. 

\subsection*{Properties of an approximation to the polar transforms}
In this section, we discuss the performance of an approximation to the minus transformation which appears in \cite{5946819} and \cite{Mine:ISITA12}. 
The min-sum approximation, as called in the literature, is defined as
\begin{equation}\label{eq::minus_approx}
 \log L_{n}^{-}(\mathbf{y_1y_2}) = -\sign(\ell_{1} * \ell_{2}) \min\{|\ell_{1}|, |\ell_{2}|\}
\end{equation}
where $\ell_{1} \triangleq \log L_{n}(\mathbf{y_1}), \ell_{2} \triangleq \log L_{n}(\mathbf{y_2})$.
While proposed in \cite{5946819} for efficient hardware implementations of polar codes, \cite{Mine:ISITA12} considers the performance of mismatched polar codes designed using the approximation over binary symmetric channels (BSC). 

First, we argue some of the derivations of the previous section extend as well to the approximate process defined as
\begin{equation*}
 \tilde{L}_{n+1}(\mathbf{y_1y_2}) = \begin{cases}
				 \tilde{L}_{n}^{-}(\mathbf{y_1y_2}), & \hbox{if } B_{n+1} = 0 \\ 
				 \tilde{L}_{n}^{+}(\mathbf{y_1y_2}), & \hbox{if } B_{n+1} = 1 
			    \end{cases}
\end{equation*}
where
\begin{multline*}
\tilde{L}_{n}^{-}(\mathbf{y_1y_2}) = \exp\left\lbrace-\sign\left(\log\tilde{L}_{n}(\mathbf{y_1}) * \log\tilde{L}_{n}(\mathbf{y_2})\right)\right. \times\\
\left.\min\left\lbrace \lvert\log\tilde{L}_{n}(\mathbf{y_1})|, |\log\tilde{L}_{n}(\mathbf{y_2}) \rvert\right\rbrace\right\rbrace 
\end{multline*}
\begin{equation*}
\tilde{L}_{n}^{+}(\mathbf{y_1y_2}) = \tilde{L}_{n}(\mathbf{y_1})\tilde{L}_{n}(\mathbf{y_2}) 
\end{equation*}

This is explained by the fact that the approximate minus transformation of the likelihood ratios satisfy, as the exact case, the following properties:
\begin{align*}
\hbox{1) }
&\left\lbrace\tilde{L}_{n+1}(\mathbf{y_1y_2}) > 1\right\rbrace \\
&\iff \left\lbrace \tilde{L}_{n}(\mathbf{y_1}) > 1\right\rbrace \cap \left\lbrace \tilde{L}_{n}(\mathbf{y_2}) < 1\right\rbrace \\
&\hspace*{5mm}\bigcup \left\lbrace \tilde{L}_{n}(\mathbf{y_1}) < 1\right\rbrace \cap \left\lbrace \tilde{L}_{n}(\mathbf{y_2}) > 1\right\rbrace \\
\hbox{2) } 
&\left\lbrace\tilde{L}_{n+1}(\mathbf{y_1y_2}) = 1\right\rbrace, \\ 
&\iff \left\lbrace \tilde{L}_{n}(\mathbf{y_1}) = 1\right\rbrace\ \bigcup \left\lbrace \tilde{L}_{n}(\mathbf{y_2}) = 1\right\rbrace,  \\
\hbox{3) } 
&\left\lbrace\tilde{L}_{n+1}(\mathbf{y_1y_2}) < 1\right\rbrace \\
&\iff \left\lbrace \tilde{L}_{n}(\mathbf{y_1}) > 1\right\rbrace \cap \left\lbrace \tilde{L}_{n}(\mathbf{y_2}) > 1\right\rbrace \\
&\hspace*{5mm}\bigcup \left\lbrace \tilde{L}_{n}(\mathbf{y_1}) < 1\right\rbrace \cap \left\lbrace \tilde{L}_{n}(\mathbf{y_2}) < 1\right\rbrace. \\
\end{align*}
Hence, the below counterparts to \eqref{eq::L_minus_equal_one}, \eqref{eq::L_minus_greater_halfequal_than_one} and \eqref{eq::L_minus_less_halfequal_than_one} continue to hold.
\begin{align}
\label{eq::L_approx_minus_equal_one}\Prob\left[\tilde{L}_{n}^{-}(\mathbf{y_1y_2}) = 1\right] &= 2\Prob\left[\tilde{L}_n(\mathbf{y}) = 1 \right] - \Prob\left[\tilde{L}_n(\mathbf{y}) = 1 \right]^2,\\
\label{eq::L_approx_minus_greater_halfequal_than_one} \Prob\left[\tilde{L}_{n}^{-}(\mathbf{y_1y_2}) \gneq 1\right] &= 2\Prob\left[\tilde{L}_n(\mathbf{y}) \lneq 1 \right]\Prob\left[\tilde{L}_n(\mathbf{y}) \gneq 1\right], \\
\label{eq::L_approx_minus_less_halfequal_than_one} \Prob\left[\tilde{L}_{n}^{-}(\mathbf{y_1y_2}) \lneq 1\right] &= \Prob\left[\tilde{L}_n(\mathbf{y}) \lneq 1 \right]^{2} + \Prob\left[\tilde{L}_n(\mathbf{y}) \gneq 1\right]^{2}.
\end{align}
Similarly, for the plus transformation as the symmetry in the likelihood ratios is preserved by the approximation, one can use the LHS of \eqref{eq::sym_cond} to derive the below counterparts to
\eqref{eq::L_plus_less_halfequal_than_one} and \eqref{eq::L_plus_greater_halfequal_than_one}:
\begin{multline}\label{eq::L_approx_plus_less_halfequal_than_one}
\Prob\left[\tilde{L}_{n}^{+}(\mathbf{y_1y_2}) \lneq 1\right] = \Prob\left[L_{n}(\mathbf{y}) \lneq 1\right]^2 \\
+ \displaystyle\sum_{\begin{subarray}{c}
\mathbf{y_1y_2}: \\ \tilde{L}_n(\mathbf{y_1}) \gneq 1 \\ \tilde{L}_n(\mathbf{y_2}) \gneq 1 \end{subarray}} W(\mathbf{y_1}|\mathbf{0})W(\mathbf{y_2}|\mathbf{0}) \max\{L_n(\mathbf{y_1}), L_n(\mathbf{y_2})\} 
\end{multline}
\begin{multline}\label{eq::L_approx_plus_greater_halfequal_than_one}
 \Prob\left[\tilde{L}_{n}^{+}(\mathbf{y_1y_2}) \gneq 1\right] = \Prob\left[L_{n}(\mathbf{y}) \lneq 1\right]^2 \\
+ \displaystyle\sum_{\begin{subarray}{c}
\mathbf{y_1y_2}: \\ \tilde{L}_n(\mathbf{y_1}) \gneq 1 \\ \tilde{L}_n(\mathbf{y_2}) \gneq 1 \end{subarray}} W(\mathbf{y_1}|\mathbf{0})W(\mathbf{y_2}|\mathbf{0}) \min\{L_n(\mathbf{y_1}), L_n(\mathbf{y_2})\} 
\end{multline}

As a result, one can carry the proofs of Propositions \ref{prop::LR_monotonicity} and \ref{prop::LR_mass_preservation} in exactly the same way by replacing
the uses of  \eqref{eq::L_minus_equal_one}, \eqref{eq::L_minus_greater_halfequal_than_one}, \eqref{eq::L_minus_less_halfequal_than_one}, \eqref{eq::L_plus_less_halfequal_than_one}, and \eqref{eq::L_plus_greater_halfequal_than_one}
by \eqref{eq::L_approx_minus_equal_one}, \eqref{eq::L_approx_minus_greater_halfequal_than_one}, \eqref{eq::L_approx_minus_less_halfequal_than_one}, \eqref{eq::L_approx_plus_less_halfequal_than_one}, and \eqref{eq::L_approx_plus_greater_halfequal_than_one}, respectively. On the other hand, for a given 
$L(\mathbf{y_2}) \neq 1$, while the exact minus transformation is strictly monotone in $L(\mathbf{y_1})$ (increasing or decreasing), the approximate one is no longer strictly but simply monotone. So, one particular difference caused by the minus approximation is identical likelihood ratios obtained for some outputs which would otherwise be different from each others. Hence, following the approximation a plus transformation at the next level will result in more outputs having likelihood ratios equal to one.
Whether ultimately this would cause loss in the performance is an open problem, i.e. we do not know if
\begin{equation}\label{eq::joint_extreme}
\Prob\left[ L_{\infty}(\mathbf{y}) < 1 | \tilde{L}_{\infty}(\mathbf{y}) = 1\right] = 1
\end{equation}
is possible.

A sufficient condition to avoid the above situation from happening is the following:
\begin{multline}\label{eq::cond_no_loss}
 \hbox{If}\hspace{2mm} \left\lbrace\begin{array}{l}
            \{ L_{n}(\mathbf{y}) < 1 \} = \{ \tilde{L}_{n}(\mathbf{y}) < 1 \}, \\
 	   \{ L_{n}(\mathbf{y}) > 1 \} = \{ \tilde{L}_{n}(\mathbf{y}) > 1 \}
           \end{array}\right\rbrace \\
 \Rightarrow \left\lbrace\begin{array}{l}
            \{ L_{n+1}(\mathbf{y_1y_2}) < 1 \} = \{ \tilde{L}_{n+1}(\mathbf{y_1y_2}) < 1 \}, \\
 	   \{ L_{n+1}(\mathbf{y_1y_2}) > 1 \} = \{ \tilde{L}_{n+1}(\mathbf{y_1y_2}) > 1 \}
           \end{array}\right\rbrace.
\end{multline}
Consequently, no performance degradation would be incurred by such an approximation. 
The process $\tilde{Q}_{n}$ would behave exactly as the process $Q_{n}$, and the synthetic channels created by the approximate transformations would also polarize with $\tilde{Q}_{\infty}\in\{0, 1\}$.

Now, we discuss how the min-sum approximation can be modified to attain this goal.
The idea is to slightly perturb the identical likelihood ratios forced by the approximation to distinct values while keeping the symmetry, and the order 
\begin{equation*}
\hbox{If } L(\mathbf{y_1}) < L(\mathbf{y_2}) \Rightarrow \tilde{L}(\mathbf{y_1}) < \tilde{L}(\mathbf{y_2}), \quad \forall \mathbf{y_1}, \mathbf{y_2}.
\end{equation*}
In this case, this new version of the approximation would satisfy \eqref{eq::cond_no_loss}. The real trouble might be to find such an approximation `better' than the exact case. Still, as we simply want to avoid \eqref{eq::joint_extreme}, trading-off the order preservation requirement, the slight perturbations might still prevent the fraction of likelihood ratios of value $1$ to dominate the approximate case as opposed to the exact case in some of the synthetic channels.

\section{conclusions}
In this paper we investigated, starting with a symmetric B-DMC, the evolution of various probabilities related to the likelihood ratios of 
the synthetic channels created by the recursive application of the polarization transformations.
We showed the processes are bounded martingales converging to the extremes of the bounded intervals using similar proof techniques used in \cite{1669570} and the inherent symmetry in the channels. 
The analysis helped us to consider the approximation given in Equation \eqref{eq::minus_approx} for the likelihood ratio recursion.

\section{Acknowledgments}
This work was supported by Swiss National Science Foundation under grant number 200021-125347/1. 
\footnotesize

\end{document}